\tikzstyle{startstop} = [rectangle, rounded corners, minimum width=3cm, minimum height=1cm,text centered, draw=black]
\tikzstyle{io} = [trapezium, trapezium left angle=70, trapezium right angle=110, minimum width=3cm, minimum height=1cm, text centered, draw=black]
\tikzstyle{process} = [rectangle, minimum width=3cm, minimum height=1cm, text centered, draw=black]
\tikzstyle{SubProcess} = [rectangle, minimum width=1cm, minimum height=1cm, text centered, draw=black]
\tikzstyle{decision} = [diamond, minimum width=3cm, minimum height=1cm, text centered, draw=black]
\tikzstyle{arrow} = [thick,->,>=stealth]
\newtheorem{theorem}{Theorem}
\newtheorem{lemma}{Lemma}
\newtheorem*{conjecture}{Conjecture A}
\newtheorem{definition}{Definition}[section]
\renewcommand*\env@matrix[1][*\c@MaxMatrixCols c]{%
  \hskip -\arraycolsep
  \let\@ifnextchar\new@ifnextchar
  \array{#1}}
\title[Elliptic curve discrete logarithm problem]{A new method for solving the elliptic curve discrete logarithm problem}
\author{Ansari Abdullah}
\address{Savitribai Phule Pune University, Pune, India}
\email{abdullah0096@gmail.com}
\author{Ayan Mahalanobis}
\address{IISER Pune, Pune, India}
\email{ayan.mahalanobis@gmail.com}
\thanks{Second author was supported by a SERB MATRICS grant and a NBHM research grant.}
\author{Vivek M.~Mallick}
\address{IISER Pune, Pune, India}
\email{vmallick@iiserpune.ac.in}	
\date{}
\begin{document}	
\begin{abstract}
The elliptic curve discrete logarithm problem is considered a secure
cryptographic primitive. The purpose of this paper is to propose a paradigm
shift in attacking the elliptic curve discrete logarithm problem. In this
paper, we will argue that initial minors are a viable way to solve this
problem. This paper will present necessary algorithms for this attack. We
have written a code to verify the conjecture of initial minors using Schur
complements. We were able to solve the problem for groups of order up to
$2^{50}$. 
\end{abstract}
\maketitle
\section{Introduction}
The discrete logarithm problem on rational points of an elliptic curve
(ECDLP) has been of interest to cryptographers for the last four decades. It is still considered a secure cryptographic primitive and is used in the latest \emph{transport layer security protocol} (TLS 1.3).
In this paper we present a new way to attack ECDLP. We present this new way
by formulating a conjecture. This paper is an extension of our earlier work~\cite{one}. We will keep using the same notations to maintain continuity. Let $\mathcal{E}$ be a non-singular elliptic curve over the finite field $\mathbb{F}_q$ in which the ECDLP will be solved.

Fix a positive integer $n^\prime$ and let $k=3n^\prime$. The \emph{Las Vegas
algorithm} that we developed in\cite{one}, tries to find a planar projective
curve $\mathcal{C}$ of degree $n^\prime$ passing through $P_i$,  $1 \leq i
\leq k$ where each point $P_i$ is on the elliptic curve $\mathcal{E}$. If
such a curve $\mathcal{C}$ exists, ECDLP is solved. This algorithm belongs
to the class of non-generic algorithms because it depends on certain
properties of the elliptic curve and cannot be used to solve the discrete
logarithm problem in other groups. Another example of a non-generic
algorithm is the index calculus algorithm to solve the discrete logarithm
problem in finite fields. Our algorithm has no restriction on the finite
field on which the elliptic curve is defined.

The Las Vegas algorithm (Algorithm \ref{alg:lasvegas}, and also see our
earlier work \cite{one}) that we developed reduces ECDLP to a linear algebra
problem. In our earlier work, we referred to it as Problem L and will
continue to do so in this work. We write this paper to \textbf{propose a
conjecture}, we will call it \emph{initial minors}.

We first show that one way to solve Problem L (and thus ECDLP) is to find a
zero-minor in a non-singular matrix. Now the question arises, how to find a
zero-minor? We argue that there is a set of minors, such that, if every
minor in that set is non-zero then all minors of the matrix are non-zero.
Then the complexity of the algorithm to solve ECDLP reduces to the
cardinality of the set of initial minors. Conjecturally, the cardinality of
the set of initial minors should be sub-exponential with respect to the size
of the matrix, whereas the cardinality of the set of all minors is
exponential.  The idea of initial minors is not new, it has been well
studied for total positive matrices~\cite{ando} and has given rise to
important mathematics~\cite{icm}.

\begin{definition}
  A set of minors for a non-singular matrix $\mathcal{A}$ is called a
  \emph{set of initial minors} if the following property holds: every minor
  in the set of initial minors is non-zero if and only if all minors in
  $\mathcal{A}$ are non-zero.
\end{definition}

\begin{conjecture}[Initial Minors]
  For a matrix $\mathcal{A}$ coming from the above approach to solve ECDLP,
  there exists a set of initial minors, whose cardinality is bounded by a
  sub-exponential function of the size of the matrix.
\end{conjecture}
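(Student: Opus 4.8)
The plan is to reduce the statement to the classical theory of initial minors for totally positive matrices and then to isolate precisely the extra input needed to make that theory go through over the finite field $\mathbb{F}_q$. First I would make the matrix $\mathcal{A}$ explicit. By the construction of Algorithm~\ref{alg:lasvegas} and~\cite{one}, $\mathcal{A}$ is, up to permutations of rows and columns, a Veronese-type evaluation matrix: its columns correspond to the monomials of degree $n^\prime$ in the three projective coordinates, and each row records the values of those monomials at one of the points $P_1,\dots,P_k$ on $\mathcal{E}$. A planar curve $\mathcal{C}$ of degree $n^\prime$ through a chosen sub-collection of the $P_i$ exists exactly when the corresponding square submatrix is singular, so Problem L is the problem of certifying the absence of a vanishing minor, and the conjecture asks for a small such certificate family. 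This is the finite-field, ``generalized Vandermonde'' analogue of the situation studied for totally positive matrices.

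Next I would import the combinatorial skeleton from the totally positive case. There, by Fekete's lemma and its refinement by Gasca and Pe\~na (see~\cite{ando}), the family of contiguous minors with an initial row set (equivalently, an initial column set) is a set of initial minors, and that family has size only $O(m^2)$ for an $m$-sized matrix. The proof is an induction on the size of the minor driven by Sylvester's determinantal identity, equivalently Dodgson condensation: a minor on a block is written through minors of a Schur complement, and those are in turn ratios of smaller minors of the ambient matrix. I would set up exactly this recursion for $\mathcal{A}$: choose an ordering of the points $P_i$ and of the degree-$n^\prime$ monomials that makes the contiguous minors as Vandermonde-like as possible, and attempt to prove, by induction on size, that if every contiguous minor of $\mathcal{A}$ is non-zero then every minor of $\mathcal{A}$ is non-zero. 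Since the number of contiguous minors is polynomial in the size of $\mathcal{A}$, a successful induction would prove the conjecture with room to spare; this is also precisely the recursion that the Schur-complement code mentioned in the abstract is testing.

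The hard part --- and the reason the statement is offered as a conjecture rather than a theorem --- is the inductive step over $\mathbb{F}_q$. In the totally positive setting, Sylvester's identity is combined with sign information: a product of positive minors is positive, so no cancellation can occur in the bilinear relation, and a Schur-complement entry assembled from non-zero minors is forced to be non-zero. Over a finite field there is no order, and the quadratic identities among minors (Sylvester, Dodgson, or more generally the Grassmann--Pl\"ucker relations) can exhibit genuine cancellation, so non-vanishing of the building blocks does not formally propagate. Closing this gap is where arithmetic or geometric input specific to the elliptic-curve matrices must enter: either (i) an argument that the algebraic relations forced on the coordinates of points lying on $\mathcal{E}$ --- through the group law and the linear-series / Riemann--Roch structure on $\mathcal{E}$ --- rule out the offending cancellations in the restricted degree-$n^\prime$ Veronese picture, or (ii) enlarging the certificate family beyond contiguous minors to a still sub-exponential set for which a cancellation-free recursion provably exists. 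I expect (i), controlling the interaction between the Plücker relations and the function field of $\mathcal{E}$, to be the central obstacle; a full proof would most naturally emerge from an analysis of why the Schur-complement recursion empirically never fails on these inputs.
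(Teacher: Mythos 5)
There is nothing in the paper to compare your argument against: the statement is offered explicitly as a conjecture, supported only by the experimental data in Tables~\ref{tab:2-minor} and~\ref{tab:sometab}, and the authors give no proof. Your proposal is likewise not a proof but a research plan, and you concede as much in your final paragraph. The gap you identify there is the genuine one, so let me state it sharply. The Fekete/Gasca--Pe\~na reduction to contiguous (initial) minors rests on Sylvester's determinantal identity \emph{plus} positivity: over $\mathbb{R}_{>0}$ a sum of products of positive minors cannot vanish, so non-vanishing propagates up the induction. Over $\mathbb{F}_q$ the same identity is a relation in which cancellation is generic, and indeed for an arbitrary matrix over a finite field it is simply false that non-vanishing of all contiguous minors forces non-vanishing of all minors (small explicit counterexamples exist already in size $3$). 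So the entire content of the conjecture is the unproved claim that the special arithmetic of the matrices arising from $\mathcal{E}$ excludes these cancellations; neither you nor the paper supplies that, and your items (i) and (ii) are candidate directions, not arguments.

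There is also a concrete factual slip in your setup. The matrix $\mathcal{A}$ in the conjecture is not the Veronese evaluation matrix $\mathcal{M}$ (monomials evaluated at the points $P_i$); it is the non-identity block of the row-reduced left-kernel $\mathcal{K}=\mathcal{A}\,|\,I$ of $\mathcal{M}$. By Theorem~\ref{pythagorean}, a zero minor $\det(\mathcal{A}[\alpha|\beta])$ corresponds to a kernel vector with $k$ zeros, i.e.\ to a degree-$n^\prime$ curve through a suitable $3n^\prime$-subset of the $2k$ points --- it is not literally the singularity of a square submatrix of the evaluation matrix. One can relate minors of a kernel basis to complementary maximal minors of $\mathcal{M}$ via Pl\"ucker coordinates, but that translation is a step you would have to carry out before any ``generalized Vandermonde'' structure on $\mathcal{M}$ could be brought to bear on the minors of $\mathcal{A}$. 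As it stands, your proposal correctly locates the obstruction but leaves the conjecture exactly as open as the paper does.
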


A stronger conjecture can be obtained by replacing sub-exponential by
polynomial above. But since the conjecture is based on experimental data,
let us state it in the modest form above.

\noindent\textbf{Acknowledgement:} We would like to thank the anonymous
referee for the valuable comments and suggestions, leading to a
considerably improved exposition of the material.

\section{The Las Vegas algorithm}
 With a slight abuse of notation we will denote by $\mathcal{E}$ the group
 of rational points of $\mathcal{E}$ over $\mathbb{F}_q$ as well.
 Furthermore, we assume that the group $\mathcal{E}$ is of prime order $p$.
 Let $P,Q\in\mathcal{E}$ be non-zero elements, \emph{the elliptic curve
 discrete logarithm problem} is to find an integer $m$, where $1\leq m<p$,
 such that $Q=mP$. Note, since $\mathcal{E}$ is a group of prime-order, such
 integers $m$ exist.

The central idea behind the attack is presented below as Theorem
\ref{the:mainthem}. The method is to construct a matrix $\mathcal{M}$ and
compute its left-kernel $\mathcal{K}$. We construct $\mathcal{M}$ with $k$
rows, where $k=3n^\prime$, one row at a time. Consider a planar projective
curve $\mathcal{C}=\sum_{u+v+w=n^\prime}a_{u,v,w}x^uy^vz^w $ where
$a_{u,v,w}\in\mathbb{F}_q$. Clearly $x^{u}y^{v}z^{w}$ ranges over all
possible monomials of degree $n^\prime $ which we assume to be ordered. For
$P_0\in\mathcal{E} $, say  $P_0=(x_0: y_0: z_0)$, we construct a row in
$\mathcal{M} $ where $(x_0:y_0:z_0)$ is substituted for $x,y$ and $z$ in
$\mathcal{C}$. The ordering in the row is the same as the ordering of
monomials in $\mathcal{C}$. We construct $\mathcal{M}$ by adding rows to it
corresponding to each $P_i$, as we did in our example for $P_0$. These $P_i$
are random distinct points on $\mathcal{E}$ constructed from random distinct
integers $n_i;\,0<n_i<p$ by computing $n_iP$. Then we compute $\mathcal{K}$
as the left-kernel of ${\mathcal{M}}$. Now if we look at the (right) kernel
$\mathcal{K}^\prime$ of $\mathcal{M}$. The elements of $\mathcal{K}^\prime$
actually produce the curve $\mathcal{C}$ that we mentioned earlier. It
clearly passes through all the points $P_i$ from which $\mathcal{M}$ was
constructed. 

A fundamental idea behind this attack rests on the fact that the left-kernel
$\mathcal{K}$ being nonzero is equivalent to the existence of a planar
projective curve $C$ in the right kernel of $\mathcal{M}$ which does not
contain the elliptic curve $\mathcal{E}$. The proof is in our earlier
work~\cite[Theorem 2]{one}.

A question can be raised: 
Assume that we constructed $\mathcal{M}$ as above and then $\mathcal{K}\neq
0$. This says that there is a curve $C$ of degree $n^\prime$. Can this curve
$C$ pass through a point $P$ on the elliptic curve from which $\mathcal{M}$
was constructed where the  intersection multiplicity at $P$ on $C$ and
$\mathcal{E}$ is more than one? If that happens then this curve $C$ passes
through more than $3n^\prime$ point on $\mathcal{E}$ counting intersection
multiplicities, because it is in the right kernel of $\mathcal{M}$ with
$3n^\prime$ points. Thus it intersects the elliptic curve in more than
$3n^\prime$ points counting intersection multiplicities and is of degree
$n^\prime$. By B\'{e}zout's theorem it contains the elliptic curve
$\mathcal{E}$ which contradicts $\mathcal{K}\neq 0$. This proves the
following theorem:
\begin{theorem}
  \label{the:intmult1}
If $\mathcal{K}\neq 0$, every non-zero vector in $\mathcal{K}$ corresponds
to a planar projective curve $C$ of degree $n^\prime$ over $\mathbb{F}_q$.
This curve does not contain the elliptic curve $\mathcal{E}$ and intersects
the elliptic curve in $3n^\prime$ distinct points with intersection
multiplicity one. 
\end{theorem}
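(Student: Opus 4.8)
The statement bundles three claims about any nonzero vector $v\in\mathcal{K}$: (i) it yields a planar projective curve $C$ of degree $n'$; (ii) $C$ does not contain $\mathcal{E}$; (iii) $C$ meets $\mathcal{E}$ in exactly $3n'$ distinct points, each with intersection multiplicity one. The reasoning just above the theorem does most of the work for (i) and (ii), so the plan is to recall that argument briefly and then concentrate the effort on (iii), which is the genuinely new assertion.

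**Claims (i) and (ii).** I would first note that a vector $v$ in the kernel $\mathcal{K}$ — indexed by the ordered monomials $x^uy^vz^w$ of degree $n'$ — is exactly a choice of coefficients $a_{u,v,w}$, hence defines the form $\mathcal{C}=\sum_{u+v+w=n'}a_{u,v,w}x^uy^vz^w$; since $v\neq 0$ this form is nonzero and cuts out a genuine projective plane curve $C$ of degree $n'$ over $\mathbb{F}_q$. That $C$ does not contain $\mathcal{E}$ is precisely the content of \cite[Theorem 2]{one} recalled in the paragraph preceding the theorem: the left-kernel $\mathcal{K}$ being nonzero is \emph{equivalent} to the existence of such a curve in the right kernel not containing $\mathcal{E}$. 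So for claims (i) and (ii) I would simply cite that equivalence.

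**Claim (iii), the main point.** Here is where B\'ezout's theorem enters. The curve $C$ lies in the right kernel of $\mathcal{M}$, which is built from the $k=3n'$ points $P_1,\dots,P_k$; hence $C$ passes through all $k$ of them, so $|C\cap\mathcal{E}|\ge 3n'$ as a set. On the other hand, $C$ has degree $n'$, $\mathcal{E}$ has degree $3$, and since $C\not\supseteq\mathcal{E}$ (claim (ii)), B\'ezout gives $\sum_{R\in C\cap\mathcal{E}} I(R;C\cap\mathcal{E}) = 3n'$, the sum of local intersection multiplicities being exactly $3n'$. Combining: we have at least $3n'$ points, each contributing a multiplicity $\ge 1$, and the multiplicities sum to exactly $3n'$. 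The only way this is consistent is that there are exactly $3n'$ intersection points and every local multiplicity equals $1$. In particular the $P_i$ are forced to be distinct (which also re-confirms that the randomly chosen $n_i$, being distinct mod $p$, give distinct points — though that is used as an input hypothesis rather than a conclusion) and $C$ meets $\mathcal{E}$ transversally at each of them with no further intersection points.

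**Anticipated obstacle.** The one subtlety worth stating carefully is the hypothesis that the $P_i$ are genuinely distinct points of $\mathcal{E}$: the construction takes distinct integers $n_i$ with $0<n_i<p$ and sets $P_i=n_iP$, and since $\mathcal{E}$ has prime order $p$ and $P\neq 0$, the map $n\mapsto nP$ is injective on $\{1,\dots,p-1\}$, so the $P_i$ are indeed $3n'$ distinct points — this is the only place the prime-order assumption is invoked, and it is what lets the set-theoretic count ``$\ge 3n'$ points'' be legitimate. Beyond that, the argument is a clean pigeonhole on B\'ezout's equality, so I expect no real difficulty; the care is entirely in making sure the hypotheses of B\'ezout (distinct curves, i.e. no common component) are in force, which is exactly what claim (ii) supplies.
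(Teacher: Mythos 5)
Your proposal is correct and follows essentially the same route as the paper: claims (i) and (ii) are obtained by citing the equivalence from \cite[Theorem 2]{one}, and claim (iii) is the same B\'ezout counting argument (the paper phrases it as a contradiction — a multiplicity $\geq 2$ at some chosen point would force more than $3n'$ intersections — while you run the count directly as an equality/pigeonhole, which is the same idea). Your extra remark that the prime order of $\mathcal{E}$ guarantees the $P_i$ are genuinely distinct is a small point of care the paper leaves implicit, but it does not change the argument.
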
 

\begin{proof}
  Since $\mathcal{K} \neq 0$, the fact from \cite[Theorem 2]{one} stated
  above, ensures a planar projective curve $C$ of degree $n'$.

  To complete the proof, we need to prove that the curve $C$ passes through
  each chosen point on the elliptic curve with multiplicity exactly once. If
  this does not hold, the curve passes through one of the chosen points with
  multiplicity at least 2.  However, being an element of the right kernel,
  it does pass through all the $3n'$ chosen points. This means that $C$ and
  $\mathcal{E}$ intersect at more than $3n'$ points contradicting
  B{\'e}zout's theorem.
\end{proof}

The next theorem is vital for the algorithm we develop. Now there are two positive integers $k$ and $l$. We think of $l$ as extra points and $k=3n^\prime$, where $n^\prime$ is a positive integer. The choice of $3n^\prime$ comes from B\'{e}zout's theorem, that a degree $n^\prime$ planar projective curve intersects an elliptic curve at most at $3n^\prime$ points.

The inclusion of extra points lets us test many different groups of points
of size $k$ simultaneously. In particular, we can test for ${k+l\choose l}$
possible curves $\mathcal{C}$ simultaneously. This is a basic advantage of
our algorithm. We have shown that (see \cite[Section 3]{one}) $k=l$ gives us
the optimal probability of success for our Las Vegas algorithm. The size of
the elliptic curve group $\mathcal{E}$ is taken large enough compared to $k$
and $l$ such that our theorems make sense. For practical purposes, the size
of $\mathcal{E}$ is a prime $p$ and $k,l$ is about the size of $\log_2{p}$. 
\subsection{The main theorem}
\begin{theorem}[Main Theorem]
  \label{the:mainthem}
Let $k=3n^\prime$ for some positive integer $n^\prime$. Let $l$ be another positive integer. Choose positive integers $s$ and $t$, such that, $s\neq t$ but $s+t=k+l$. Then construct the matrix $\mathcal{M}$ as described before with rows corresponding to $n_iP$ for $i=1,2,\ldots,s$ and $-n_jQ$ for $j=1,2,\ldots,t$.

Let $\mathcal{K}$ be the left-kernel of $\mathcal{M}$. The following is true:
\begin{description}
\item[a] The left-kernel $\mathcal{K}$ is of dimension $l$.
\item[b] If there is a vector $v$ in $\mathcal{K}$ with at least $l$ zeros then there is a curve $\mathcal{C}$ passing through $3n^\prime$ points corresponding to the non-zero points of $v$.
\end{description}
\end{theorem}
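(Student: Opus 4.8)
The plan is to rephrase both statements as facts about the linear system cutting out plane curves of degree $n'$ through the chosen points, and then extract them from B\'ezout's theorem applied to the irreducible cubic $\mathcal{E}$ together with one exact dimension count. Write $N=\binom{n'+2}{2}$ for the number of monomials of degree $n'$, so that $\mathcal{M}$ is a $(k+l)\times N$ matrix with rows $r_1,\dots,r_{k+l}$, the evaluation vectors of the $s$ points $n_iP$ and the $t$ points $-n_jQ$. A vector in the right kernel of $\mathcal{M}$ is precisely a degree-$n'$ form vanishing at all $k+l$ points, i.e.\ a curve of degree $n'$ through all of them; a vector $v$ in the left kernel $\mathcal{K}$ is a linear dependence $\sum_i v_i r_i=0$ among the rows. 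The key structural input is that $Q=mP$ with $p$ prime: every one of the $k+l$ points then lies on the single curve $\mathcal{E}$, and, since the $n_i$ are distinct modulo $p$ on each side and $m$ is a unit, the $k+l$ points are pairwise distinct --- the only genericity assumption being that no $P$-side point coincides with a $Q$-side one.

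For part (a) I would argue as follows. Since $k+l=3n'+l\ge 3n'+1$ and these are distinct points of the irreducible cubic $\mathcal{E}$, a degree-$n'$ curve through all of them would meet $\mathcal{E}$ in more than $3n'$ points, whereas a degree-$n'$ curve and a cubic meet in at most $3n'$ points unless they share a component; hence such a curve shares a component with $\mathcal{E}$ and, $\mathcal{E}$ being irreducible, contains $\mathcal{E}$. Thus the right kernel of $\mathcal{M}$ is exactly the space of degree-$n'$ forms divisible by the cubic $F_\mathcal{E}$ defining $\mathcal{E}$, namely $\{F_\mathcal{E}\cdot G : \deg G=n'-3\}$ (and $\{0\}$ if $n'<3$), whose dimension is $\binom{n'-1}{2}=\frac{(n'-1)(n'-2)}{2}=N-3n'=N-k$, using injectivity of $G\mapsto F_\mathcal{E}G$ and the identity $\binom{n'+2}{2}-3n'=\binom{n'-1}{2}$. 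Therefore $\mathcal{M}$ has rank $N-(N-k)=k$, so $\dim\mathcal{K}=(k+l)-k=l$.

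For part (b), let $v\in\mathcal{K}$ have at least $l$ zero entries, and let $S$ be its support, so $|S|\le(k+l)-l=k=3n'$. Restricting $\sum_i v_ir_i=0$ to $S$ gives a non-trivial dependence among the rows indexed by $S$, so the submatrix $\mathcal{M}_S$ has rank at most $|S|-1\le 3n'-1$; hence the space of degree-$n'$ forms vanishing on $\{P_i\}_{i\in S}$ has dimension at least $N-(3n'-1)=\frac{(n'-1)(n'-2)}{2}+1$. Since the forms divisible by $F_\mathcal{E}$ account for only $\frac{(n'-1)(n'-2)}{2}$ of those dimensions, there is a degree-$n'$ curve $\mathcal{C}$ through every point of $S$ that does not contain $\mathcal{E}$; by B\'ezout it then meets $\mathcal{E}$ in exactly $3n'$ points counted with multiplicity, and these contain --- and, when $|S|=3n'$, coincide with, each of multiplicity one, exactly as in Theorem~\ref{the:intmult1} --- the points corresponding to the non-zero entries of $v$.

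The two dimension computations are routine and B\'ezout does the geometric work in both parts. The step I expect to need the most care is the word ``exactly'' in part (a): one must verify both that every curve through the $k+l$ points truly contains $\mathcal{E}$ (which uses $l\ge1$ and distinctness) and that the family of such curves has dimension precisely $N-k$ rather than merely at least that. The companion point in (b) --- that it is the strict inequality $\ge\frac{(n'-1)(n'-2)}{2}+1$, forced by the dependence encoded in $v$, that makes the curve avoid $\mathcal{E}$ --- is what keeps the statement from being vacuous and deserves to be stated explicitly.
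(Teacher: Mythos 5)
Your argument is correct, and it is worth separating the two parts when comparing it with the paper. For part (a) you are in effect writing out in full the ``simple counting argument'' that the paper only cites from \cite{one}: the right kernel of $\mathcal{M}$ is identified, via B\'ezout and the irreducibility of the non-singular cubic, with the forms $F_{\mathcal{E}}\cdot G$, and the identity $\binom{n'+2}{2}-3n'=\binom{n'-1}{2}$ gives $\operatorname{rank}\mathcal{M}=k$ and $\dim\mathcal{K}=l$. This is the same approach, just made self-contained. For part (b) you take a genuinely different route. The paper's proof (deferred to \cite[Corollary~1]{one}) leans on the group-theoretic characterization that $3n'$ points of $\mathcal{E}$ sum to $\mathcal{O}$ if and only if some degree-$n'$ curve cuts them out, and reads off the existence of $\mathcal{C}$ from that fact; you instead extract the curve purely from linear algebra: the dependence $v$ forces $\operatorname{rank}\mathcal{M}_S\le |S|-1\le 3n'-1$, so the space of degree-$n'$ forms vanishing on $S$ strictly exceeds the $\binom{n'-1}{2}$-dimensional subspace of forms divisible by $F_{\mathcal{E}}$, and any form in the difference is the desired $\mathcal{C}$. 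What your version buys is independence from the Riemann--Roch/group-law input of \cite{one} and an explicit reason why $\mathcal{C}$ does not contain $\mathcal{E}$ (the ``$+1$'' in the dimension count), which the paper's sketch leaves implicit; what the paper's route buys is that the same characterization is what ultimately converts the curve into the relation $\sum n_i \equiv m\sum n_j$ that solves ECDLP, so it is doing double duty there. Two small points you correctly flag but should keep visible: the distinctness of all $k+l$ points (in particular no coincidence between a $P$-side and a $Q$-side point) is a genuine hypothesis needed for the B\'ezout step in (a), and the fact that the support of $v$ has size exactly $3n'$ (so that ``passing through $3n'$ points corresponding to the non-zero entries'' is literally attained) is the content of the paper's Lemma~\ref{lem:atlstvzr} in the case $k=l$, not something (b) itself asserts for general $l$.
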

\begin{proof}
  A detailed proof of this theorem is in our earlier work~\cite[Corollary 1]{one}. For the convenience of the reader we will sketch a rough argument. The basic reason behind the argument is that $\sum_{i=1}^kP_i=\mathcal{O}$ for $k$ points on the elliptic curve if and only if there is a planar projective curve $C$ of degree $n^\prime$ passing through these points. For a proof look at~\cite[Theorem 1]{one}. Then the proof of (a) above follows from simple counting argument~\cite[Theorem 3]{one}.

For a proof of (b) notice that if there is a vector in the left-kernel with at least $l$ zeros then there is a curve passing through the points on the curve that correspond to the non-zero elements of the vector. There are at most $3n^\prime$ points.
\end{proof}

Using the above theorem, the algorithm is broken down into two parts. One,
is a randomized algorithm that creates the left-kernel. The second part is
to determine if the left-kernel contains a vector with $l$ zeros. We called
this second part Problem L in \cite{one}.  If there is a vector with
$l$ zeros, we have to find that vector because the position of these zeros
matter. In this paper, we convert this problem into finding a zero-minor.
Note that, the probability is optimum when $k=l$ and
$k=O(\log{p})$~\cite[Theorem 5]{one}. So we will assume that $k=l$ for the
rest of the paper.  Note that now $\mathcal{K}$ is a $k$-dimensional
subspace of the $2k$-dimensional vector space over $\mathbb{F}_q$. The rows
of which are vectors of length $2k$ over $\mathbb{F}_q$. We define Problem L
as follows:
\begin{definition}[Problem L]
Determine if the subspace generated by the rows of $\mathcal{K}$ contains a vector with $k$ zeros. If it does, find the location of these zeros. 
\end{definition}
\subsection{Using Theorem \ref{the:mainthem}}
To use Theorem \ref{the:mainthem}, first we select an integer $n^\prime=O(\log{p})$ and compute $k = 3n^\prime$. Then choose $2k$ random \emph{distinct positive integers} of size less than $p$ and form a sequence $S$ of size $s+t$.
Using elements from $S$ compute points $P_i=n_iP$ for $i=1,2,\ldots,s$ and $Q_j=-n_{s+j}Q$ for $j=1,2,\ldots,t$.

Now we check to see if there exist a planar projective curve $\mathcal{C}$ of degree $n^\prime$ passing through any subset of size $3n^\prime$ of $s+t$ points. If such a curve exists, the discrete logarithm problem is solved. Else a new set $S$ is generated and the process is repeated.

In order to prove the existence of a homogeneous curve of degree $n^\prime$ we construct a matrix $\mathcal{M}$, corresponding to $P_i$ and $Q_j$ as rows.

Each row in $\mathcal{M}$ represents a $P_i$ or a $Q_j$. In other words, each row represents an integer from $S$. We describe the process as an algorithm.

\begin{algorithm}[H]
  \label{alg:lasvegas}
\SetAlgoLined
\:
\KwData{Two points $P,Q\in\mathcal{E}$ such that $mP = Q$. Order of $\mathcal{E}$ is a prime $p$. }
\KwResult{$m$}
\vspace{1ex}
  \hspace{1ex}\textbf{Step 1 :} $n^\prime \gets O\left(\log_{2}(p)\right)$\\
  \hspace{1ex}\textbf{Step 2 :} $k \gets 3n^\prime$\\
  \hspace{1ex}\textbf{Step 3 :} Generate $2k$ random positive integers less than $p$\\
  \hspace{1ex}\textbf{Step 4 :} Use random numbers and $\mathcal{C}$ along with $P$ and $Q$ to generate the matrix $\mathcal{M}$ using points from above\\
  \hspace{1ex}\textbf{Step 5 :} $\mathcal{K} \gets \text{left-kernel} (\mathcal{M}$)\\
  \hspace{1ex}\textbf{Step 6 :} If $\mathcal{K}$ has a vector with $k$ zeros stop else go to Step 3\\
\caption{A Las Vegas algorithm to reduce ECDLP to Problem L}
\end{algorithm} 
Note that we generate $2k$ points above because we take $l=k$. We will soon move from $\mathcal{K}$ to $\mathcal{A}$ which is a $k\times k$ matrix consisting of the non-identity part of $\mathcal{K}$.
\subsection{Complexity of the Las Vegas algorithm}
The Las Vegas algorithm can be divided into two parts.
The first part reduces the ECDLP to a problem in linear algebra. The second part is to solve the Problem L. Reducing ECDLP to Problem L is a Las Vegas algorithm with success probability 0.6 and it has polynomial 
space and time complexities~\cite[Theorem 5]{one}. 
\section{Minors to solve ECDLP}
Let $A$ be a $k\times k$ non-singular matrix. Let $\alpha,\beta$ be non-empty subsets of $\{1,2,\ldots,k\}$ of same size. Then $A[\alpha|\beta]$ is a square sub-matrix of the size of $\alpha$ and $\beta$
which contains elements that are in the intersection of rows and columns of $A$ indexed by $\alpha$ and $\beta$ respectively. The determinant $\det\left(A[\alpha|\beta]\right)$ is the minor $A[\alpha|\beta]$.
A principal minor is a minor when $\alpha=\beta$ and is denoted by $A[\alpha]$. 
In this paper, we are interested in finding $\alpha$ and $\beta$ such that $\det\left(A[\alpha|\beta]\right)=0$.
\subsection{Using minors to solve problem $L$}
Let $\mathcal{K}$ be a $k\times 2k$ matrix. Now the rows of $\mathcal{K}$ is a basis of the left-kernel of $\mathcal{M}$. One can do a row-reduction of $\mathcal{K}$ to reduce one part of it to the identity. We will assume that $\mathcal{K}=\mathcal{A}|I$. Where $\mathcal{A}$ is a $k\times k$ matrix and $I$ is the identity matrix of size $k$. Note that, in NTL~\cite{Shoup1996NtlAL}, $\mathcal{K}$ is produced as follows:
\begin{equation*}
\mathcal{K} =   \begin{pmatrix}[ccccc|ccccc]
    a_{11} & a_{12} & a_{13} & \dots  & a_{1k} & 0 &0&\dots&0&1 \\
    a_{21} & a_{22} & a_{23} & \dots  & a_{2k} &0 &0&\dots&1&0 \\
    \vdots & \vdots & \vdots & \ddots & \vdots &\vdots & \vdots & \ddots & \vdots & \vdots \\
    a_{k1} & a_{k2} & a_{k3} & \dots  & a_{kk} &1 &0&\dots&0&0
\end{pmatrix}
\end{equation*}
Now each row of $\mathcal{K}$ contains $k-1$ zeros. Our problem is to find a vector which is a linear combination of the rows of $\mathcal{K}$ which has $k$ zeros. We would be more interested in the
non-identity
part of $\mathcal{K}$ which we denote by $\mathcal{A}$.
\begin{equation*}
\mathcal{A} =   \begin{pmatrix}[ccccc]
    a_{11} & a_{12} & a_{13} & \dots  & a_{1k}  \\
    a_{21} & a_{22} & a_{23} & \dots  & a_{2k}  \\
    \vdots & \vdots & \vdots & \ddots & \vdots  \\
    a_{k1} & a_{k2} & a_{k3} & \dots  & a_{kk}  
\end{pmatrix}
\end{equation*}
Note that, if $a_{ij}=0$ for some $i,j$ then ECDLP is solved. We now prove a lemma leading to a theorem.
\begin{lemma}
  \label{lem:atlstvzr}
  Recall the definition of $\mathcal{K}$. If there is a vector $v$ in
  $\mathcal{K}$ with at least $k$ zeros, then $v$ has exactly $k$ zeros.
\end{lemma}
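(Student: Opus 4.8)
The plan is to prove the statement by a counting argument on the support of $v$, i.e.\ its set of non-zero coordinates, which I will write $\mathrm{supp}(v)$: I will show $|\mathrm{supp}(v)| = 3n' = k$, so that $v$ has exactly $2k-k=k$ zeros. Throughout, $v$ is taken non-zero (the zero vector is tacitly excluded, being useless for Problem~L), and I use that the $2k$ points $n_iP$ and $-n_jQ$ indexing the rows of $\mathcal{M}$ are pairwise distinct — this is part of the construction, and in any case if two of them coincided one could read off $m$ directly, so ECDLP would already be solved.

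The easy half is the upper bound. A vector in $\mathcal{K}$ has length $2k=6n'$, so if it has at least $k$ zeros it has at most $2k-k=k=3n'$ non-zero coordinates; hence $|\mathrm{supp}(v)|\le 3n'$.

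The substantive half is $|\mathrm{supp}(v)|\ge 3n'$, where the geometry of $\mathcal{E}$ enters. Write $\mathrm{supp}(v)=\{i_1,\dots,i_r\}$, let $R_1,\dots,R_r$ be the corresponding distinct points of $\mathcal{E}$, and let $\rho_m$ be the $i_m$-th row of $\mathcal{M}$, i.e.\ the vector of all degree-$n'$ monomials evaluated at $R_m$. Because $v$ lies in the left-kernel of $\mathcal{M}$ and is supported on $\{i_1,\dots,i_r\}$, we have $\sum_{m=1}^{r} v_{i_m}\rho_m = 0$ with every coefficient $v_{i_m}$ non-zero; that is, the degree-$n'$ evaluation vectors of $R_1,\dots,R_r$ are linearly dependent. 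I would then invoke the fact that fewer than $3n'$ distinct points on a non-singular plane cubic impose independent conditions on the linear system of degree-$n'$ plane curves, equivalently that their evaluation vectors are linearly independent. This is precisely the geometric input behind \cite[Theorem~1]{one} and Theorem~\ref{the:mainthem}(b): the restriction of degree-$n'$ forms to $\mathcal{E}$ surjects onto a complete linear system of degree $3n'$ on the elliptic curve, and since $3n'-r>0$, Riemann--Roch forces the space of degree-$n'$ forms through $R_1,\dots,R_r$ to have the expected codimension $r$, so no non-trivial dependence exists. Hence $r\ge 3n'$, and together with the upper bound $r=3n'=k$, so $v$ has exactly $k$ zeros.

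A shorter route for this last step is to quote Theorem~\ref{the:mainthem}(b) as a black box: since $v$ has at least $k=l$ zeros, that theorem yields a curve through $3n'$ points corresponding to non-zero coordinates of $v$, which already forces $v$ to have at least $3n'$ non-zero coordinates, hence at most $k$ zeros. I expect the only delicate point to be this lower bound on $|\mathrm{supp}(v)|$: one must genuinely use the distinctness of the $2k$ points — without it, the dependence among the rows $\rho_m$ could be the trivial one coming from two equal rows, and the lemma would be false — and one must use the "$r<3n'$" form of the non-degeneracy statement rather than only the "$r=3n'$" form recorded as \cite[Theorem~1]{one}.
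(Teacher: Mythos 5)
Your proof is correct, but it reaches the essential lower bound $|\mathrm{supp}(v)|\ge 3n'$ by a genuinely different route than the paper. The paper argues by contradiction using only the ``exactly $3n'$ points'' statement of \cite[Theorem 1]{one}: if $v$ had only $k'<k$ non-zero entries, one pads the corresponding rows of $\mathcal{M}$ with $k-k'$ arbitrary further rows to get a $3n'$-row matrix $\mathcal{M}'$ that still has $v$ (restricted, padded by zeros) in its left-kernel; hence the sum of those $3n'$ points is $\mathcal{O}$, and since the padding points can be varied freely while the kernel vector survives (its coefficients there are zero), the sum of $3n'$ points would be $\mathcal{O}$ no matter how one of them is chosen --- absurd. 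You instead prove directly that fewer than $3n'$ distinct points of $\mathcal{E}$ impose independent conditions on degree-$n'$ forms, via surjectivity of restriction onto the complete degree-$3n'$ linear system and Riemann--Roch (the obstruction $h^0(\sum_m R_m - D)$ vanishes because the degree $r-3n'$ is negative), so the corresponding rows of $\mathcal{M}$ are linearly independent and cannot support a kernel vector. Both arguments are sound, and both need the distinctness of the $2k$ points, which you are right to make explicit (the paper relies on it silently). What your route buys is a clean, self-contained independence statement for $r<3n'$ that the paper never records --- as you note, \cite[Theorem 1]{one} only covers $r=3n'$ --- at the price of invoking Riemann--Roch; the paper's padding-and-variation trick derives the same conclusion using only the $r=3n'$ statement it already has. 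One caution: your proposed ``shorter route'' via Theorem~\ref{the:mainthem}(b) is the weak spot, since that theorem's phrase ``passing through $3n'$ points corresponding to the non-zero points of $v$'' essentially presupposes the conclusion and its proof is only sketched by citation, so quoting it as a black box here risks circularity; your Riemann--Roch argument is the one to keep.
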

\begin{proof}
Assume for a moment that $v$ has more than $k$ zeros.
Then concentrate on the non-zero entries of $v$. Assume that there are $k^\prime$ of them which is less than $k$. Now we create a new matrix $\mathcal{M}^\prime$.
First $k^\prime$ rows of $\mathcal{M}^\prime$ are the rows in $\mathcal{M}$
with indices same as indices of non-zero entries in $v$. Now we can take any
arbitrary $k-k^\prime$ rows of $\mathcal{M}$ where $v$ is zero and put them
in $\mathcal{M}^\prime$. Then $\mathcal{M}^\prime$ has $k(=3n^\prime)$ rows
and a non-zero left-kernel. Then sum of this points of the elliptic curve is
$\mathcal{O}$, the point at infinity. However there are $k-k^\prime$
arbitrary rows in $\mathcal{M}^\prime$. We can replace any one of them with
an arbitrary element of the elliptic curve and get the sum of those points
as $\mathcal{O}$. This is a contradiction. Thus $v$ has exactly $k$ zeros.  
\end{proof}
\begin{theorem}
\label{pythagorean}
If $\det\left(\mathcal{A}[\alpha|\beta]\right) = 0$ for some non-empty subsets $\alpha,\beta\subseteq\{1,2,\ldots,k\}$, there exists a vector with $k$ zeros in the linear span of the rows of the left-kernel $\mathcal{K}$. Furthermore, the position of zeros are positions $\beta$ and $\{k+i:\,i\notin\alpha\}$.

Conversely, assume that there is a vector in the linear span of the rows of $\mathcal{K}$ with $k$ zero components, then there is a zero-minor in $\mathcal{A}$.
\end{theorem}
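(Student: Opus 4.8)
\emph{Proof proposal.}
The plan is to translate between vectors in the row span of $\mathcal{K}$ and linear dependences among the rows of submatrices of $\mathcal{A}$, using the block structure $\mathcal{K}=[\mathcal{A}\mid I]$. The key observation is that every vector $v$ in the linear span of the rows of $\mathcal{K}$ is of the form $v=x^{T}\mathcal{K}=[\,x^{T}\mathcal{A}\mid x^{T}\,]$ for its coefficient vector $x\in\mathbb{F}_q^{k}$; in particular the last $k$ entries of $v$ are literally $x$, so the zeros of $v$ in positions $k+1,\dots,2k$ occur exactly where $x$ vanishes, while the zeros of $v$ in positions $1,\dots,k$ are the columns $j$ with $(x^{T}\mathcal{A})_j=0$.

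For the forward implication, suppose $\det\!\left(\mathcal{A}[\alpha\mid\beta]\right)=0$ with $|\alpha|=|\beta|=r\geq 1$. Then the rows of the $r\times r$ matrix $\mathcal{A}[\alpha\mid\beta]$ are linearly dependent over $\mathbb{F}_q$, so there is a nonzero $y\in\mathbb{F}_q^{r}$ with $y^{T}\mathcal{A}[\alpha\mid\beta]=0$; extending $y$ by zeros outside $\alpha$ yields a nonzero $x\in\mathbb{F}_q^{k}$ supported in $\alpha$ with $(x^{T}\mathcal{A})_j=0$ for all $j\in\beta$. By the observation above, $v:=x^{T}\mathcal{K}$ is nonzero (its second block is $x\neq 0$) and vanishes at the $r$ positions of $\beta$ together with the $k-r$ positions $\{\,k+i:i\notin\alpha\,\}$, i.e.\ at least $k$ positions in all. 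Lemma~\ref{lem:atlstvzr} then forces $v$ to have exactly $k$ zeros, so these are precisely its zero positions, which gives the ``furthermore'' claim.

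For the converse, take a nonzero $v$ in the linear span of the rows of $\mathcal{K}$ with $k$ zero components and write $v=[\,x^{T}\mathcal{A}\mid x^{T}\,]$; then $x\neq 0$. Put $\alpha:=\{\,i:x_i\neq 0\,\}$ and $\beta:=\{\,j:(x^{T}\mathcal{A})_j=0\,\}$, so that $v$ has $k-|\alpha|$ zeros in its second block and $|\beta|$ zeros in its first block. By Lemma~\ref{lem:atlstvzr} $v$ has exactly $k$ zeros, hence $(k-|\alpha|)+|\beta|=k$, so $|\alpha|=|\beta|=:r$, and $r\geq 1$ since $x\neq 0$; thus $\alpha,\beta$ are non-empty of equal size. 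Restricting $x$ to $\alpha$ gives a nonzero vector in the left-kernel of the $r\times r$ matrix $\mathcal{A}[\alpha\mid\beta]$, whence $\det\!\left(\mathcal{A}[\alpha\mid\beta]\right)=0$.

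The whole argument is essentially bookkeeping once the identity $v=[\,x^{T}\mathcal{A}\mid x^{T}\,]$ is recorded; the one point that genuinely needs care — and the only place the earlier lemma is used — is matching the cardinalities $|\alpha|=|\beta|$, equivalently ruling out that $v$ could pick up extra zeros and thereby decouple the two blocks. I expect this size-matching to be the main (mild) obstacle.
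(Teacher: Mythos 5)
Your proposal is correct and follows essentially the same route as the paper: both directions translate the zero minor into a linear dependence among rows of $\mathcal{A}[\alpha|\beta]$, transport it to a vector in the row span of $\mathcal{K}=[\mathcal{A}\mid I]$, and invoke Lemma~\ref{lem:atlstvzr} to force $|\alpha|=|\beta|$ and to rule out stray extra zeros. Your formulation via the identity $v=[\,x^{T}\mathcal{A}\mid x^{T}\,]$ is a cleaner rendering of what the paper phrases informally as ``apply the same row-operations to $\mathcal{K}$,'' but it is the same argument.
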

\begin{proof}
  Let us assume that there is $\alpha,\beta$ such that
  $\det\left(\mathcal{A}[\alpha|\beta]\right) = 0$. There is a sequence of
  row-operations on the sub-matrix $\mathcal{A}[\alpha|\beta]$ such that one
  row becomes zero. Now apply the same row-operations on $\mathcal{K}$ and
  obtain the vector $w$. We have zeros in components of $w$ in positions
  corresponding to elements of $\beta$. The non-zero components in positions
  greater than $k$ of $w$ depends on the rows used in the row-operation. The
  number of rows is at most the size of $\alpha$. But from Lemma
  \ref{lem:atlstvzr}, it follows that the number of rows must be same as
  $\alpha$. Moreover, the row operations cannot inadvertently introduce
  zeros in some other components of $w$ from positions $1$ to $k$. Taken all
  these together $w$ has $k$ zeros whose positions follows from the above
  discussion.

  Conversely, assume that there is a vector $w$ with $k$ zeros in the span
  of the rows of $\mathcal{K}$. Then there is a linear combination of rows
  of $\mathcal{A}$ that makes some components of $w$ in $\{1,2,\ldots,k\}$
  zero. Now consider those rows and indices of those rows constitute the set
  $\alpha$ and the indices in $\{1,2,\ldots,k\}$ of $w$ where the components
  are zero constitutes $\beta$. The size of $\alpha$ equals the size of
  $\beta$  from the above lemma and we have the zero-minor
  $\det\left(\mathcal{A}[\alpha|\beta]\right)$.

\end{proof}
The following example illustrates Theorem~\ref{pythagorean}.
Consider a $5 \times 10$ matrix $\mathrm{A}$ over $\mathbb{F}_{73}$ where each row has four zeros. We write

\[ \mathrm{A} = \left[ \:
\begin{array}{*{12}{c}}
\cline{1-2}
\multicolumn{1}{|c}{70} & \multicolumn{1}{c|}{18} & 1 & 17 & 10 & 0 & 0 & 0 & 0 & 1\\
\multicolumn{1}{|c}{10} & \multicolumn{1}{c|}{13} & 54 & 43 & 48 & 0 & 0 & 0 & 1 & 0\\
\cline{1-2}
23 & 43 & 8 & 24 & 57 & 0 & 0 & 1 & 0 & 0\\
29 & 29 & 56 & 61 & 48 & 0 & 1 & 0 & 0 & 0\\
49 & 38 & 21 & 46 & 27 & 1 & 0 & 0 & 0 & 0\\

\end{array}
\right]. \]
Note that $\mathrm{A}$ has a zero $2$-minor, for example, the minor

\begin{equation*}
{\mathrm{A}[\alpha|\beta] =  \det \begin{vmatrix}
70 & 18 \\
10 & 13 
\end{vmatrix}  = 0}
\end{equation*}
where
\begin{equation*}
\alpha = \{1, 2\}\;\text{and}\;\beta = \{1, 2\}.
\end{equation*}

The first row of $\mathrm{A}$ can be reduced by the row-operation $R_1 - (7 \times R_2)$ to yield

\[ \mathrm{A}_1 = \left[ \:
\begin{array}{*{12}{c}}
0 & 0 & 61 & 8 & 39 & 0 & 0 & 0 & 66 & 1\\
10 & 13  & 54 & 43 & 48 & 0 & 0 & 0 & 1 & 0\\
23 & 43 & 8 & 24 & 57 & 0 & 0 & 1 & 0 & 0\\
29 & 29 & 56 & 61 & 48 & 0 & 1 & 0 & 0 & 0\\
49 & 38 & 21 & 46 & 27 & 1 & 0 & 0 & 0 & 0\\
\end{array}
\right] \] 
The first row now has five zeros which solves ECDLP.
Thus if there is a zero-minor in $\mathrm{A}$, ECDLP is solved.
\section{Schur complement}
The Schur complement has a long and distinguished history in the theory of
matrices. Other than giving an algorithmic description of a variant of the
Schur complement (see equation \eqref{eqn:schurcpt} and the sentence
following that), we will not go into further details in this paper but will
use the work of Ando~\cite{ando} and Brualdi and Schneider~\cite{bru} as the
standard reference for this paper. Before describing the process of
generating the Schur complement, we would like to alert the reader
that ours is not the usual Schur complement~\cite[\S 1]{bru}. In the
traditional setting of Schur complements, matrices $E$ and $F$ remain fixed.
This is not so in our case.

Let $A=(a_{ij})$ be a $k\times k$ matrix. We can write $A$ as
\[\begin{pmatrix}
E & F\\
G & H
\end{pmatrix}\]
as block matrix where $E$ is the leading principal sub-matrix of $A$ of size
$k^\prime\times k^\prime$. In this section we assume that $E$ is
non-singular. The matrix $G$ is of size $(k-k^\prime)\times k^\prime$, $F$
is of size $k^\prime\times(k-k^\prime)$ and $H$ is of size
$(k-k^\prime)\times(k-k^\prime)$.

We look at the Schur complement in an algorithmic way. We talk about
row-operations in $E$, but they are row operations in the whole of $A$. We
do this for sake of exposition.  There are two different kind of
row-operations that we talk about one is in $E$ and another is by $E$.

We consider the first column of $A$. If $a_{11}=0$ we do a row permutation
in $E$ to get a non-zero $a_{11}$. We can do that because $E$ is
non-singular. Now using this $a_{11}$ and row-operations by $E$, we can make
all $a_{x1}=0$ for $x=2,3,\ldots,k$. Then we look at $a_{22}$. If it is zero
then we do a row-permutation in $E$ to make it non-zero. Then use $a_{22}$
and row operations by $E$ to make $a_{x2}=0$ for all $x=3,4,\ldots,k$. We
continue with $k^\prime$ columns of $E$ till we get a upper-triangular
matrix $E^\prime$ in place of $E$. In practice, if we do not get a upper
triangular matrix, i.e., one of the diagonal element is zero, we have solved
Problem L. Now the resulting matrix is of the form
\begin{equation} \label{eqn:schurcpt}
  A^\prime=
  \begin{pmatrix}
    E^\prime & F^\prime\\
    0 & H^\prime
  \end{pmatrix}.
\end{equation}
The matrix $H^\prime$ is called the \emph{Schur complement} and is denoted
by $A/E$. Furthermore, if $\det(E)\neq 0$ then $\det(E^\prime)\neq 0$.
Moreover $\det(A)=\det(A^\prime)$ and
$\det(A^\prime)=\det(E^\prime)\det(H^\prime)$.

Two properties of the Schur complement follow from this discussion. For sake of readability, we use
$I=\{k^\prime+1\leq i_1<i_2,\cdots,<i_r\leq k\}$ and $J=\{k^\prime+1\leq j_1<j_2,\cdots,<j_r\leq k\}$ for some positive integer $r$.
The first property is
\begin{equation}\label{eqn1}
\det\left(H^\prime[I|J]\right) = 0 \;\text{implies that}\;
\det\left(A[1,2,\ldots,k^\prime,I|1,2,\ldots,k^\prime,J]\right) = 0.
\end{equation}
We will use this to search for zero-minors in our algorithm.
The second property is that Schur complement of a Schur complement is a
Schur complement. This follows from the algorithmic way we defined Schur
complements, in particular, when we reduce $E$ to an upper-triangular matrix
systematically by reducing one column after another.

\subsection{An algorithm using the Schur complement}
The output of our randomized algorithm is a $k\times 2k$ matrix. The rows of this matrix is a basis for the left-kernel $\mathcal{K}$. So one can do a row operation on one half of this and make that half an identity matrix. We denote the other half, the  non-identity half by $\mathcal{A}$. For sake of definiteness we agree that this non-identity half is the first half of the left-kernel $\mathcal{K}$.

The question now is the following: How do we determine if there is a zero-minor in $\mathcal{A}$? There are too many minors. The number of principal minors is $2^k-1$.
It is not possible to check all of them. The question that we raise in this paper is: do we need to check all minors to determine if there is one minor of $\mathcal{A}$ that is zero? It is clear that the number of minors we check to determine if $\mathcal{A}$ has a zero-minor determines the complexity of our algorithm.

In the rest of this section, we take $k=3\log(p)$.

The first step of the algorithm is to compute $\mathcal{A}$ as defined
earlier and then check all possible 2-minors. If we find a zero 2-minor we
stop. 

If we find none, our first attempt was to compute a new $\mathcal{K}$ and
then $\mathcal{A}$. Repeat until a zero 2-minor is found. We count the
number of tries we had to go through until we find the first zero 2-minor,
see Table~\ref{tab:2-minor}. The average is over forty tries to solve ECDLP
over the same curve and the same group.
\begin{table}[!h]
\centering
\caption{ All 2-minors : Average number of iteration to solve ECDLP.} \label{tab:2-minor}
\vspace{2ex}
\begin{tabular}{l|c|c|c}
\hline
$\mathbb{F}_q$ & Average Iterations & $log_2{p} $ & No. of rows of $\mathcal{A}$  \\
\hline
$2^{40}$ & $113.4$ & 39 & 360  \\
$2^{43}$ & $716.3$ & 42 & 387  \\
$2^{46}$ & $4406.26$ & 45 & 414  \\
\hline
\end{tabular}
\end{table}

Hoping to get better results, we modified the algorithm by introducing Schur
complements. We did a column-reduce of one column after another, this
constructed one Schur complement after another. Recall that
$A=\begin{pmatrix}
E & F\\
G & H
\end{pmatrix}$, where $E$ is a principal diagonal matrix of size $k^\prime$. We start with $k^\prime=1$ and go until $k^\prime=\lfloor{k/2}\rfloor$. We reduce $E$ to a upper-triangular matrix by successively
reducing $G=0$ using row-operations. This process produces one Schur complement after another. Recall that a Schur complement of a Schur complement is a Schur complement. Then we looked for a zero 2-minor by 
going over all possible 2-minors in successive Schur complements. Note, Equation~\ref{eqn1} gives us the zero-minor in the original matrix corresponding to the zero 2-minor in the Schur complement.
If we found one we stopped. The data thus obtained is presented below as Table~\ref{tab:sometab}. The reader will notice a remarkable speed 
up. The number of kernels that we had to compute went down significantly as we started using Schur complements. Here the average is over ten tries.

\begin{table}[!h]
\centering
\caption{ Schur Complement : Average number of iteration to solve ECDLP.} \label{tab:sometab}
\vspace{2ex}
\begin{tabular}{l|c|c|c}
\hline
$\mathbb{F}_q$ & Average Iterations & $\log_2{p}$ & No.~of rows of $\mathcal{A}$  \\
\hline
$2^{40}$ & 2.8 & 39 & 360   \\
$2^{41}$ & 3.1 & 40 & 369   \\
$2^{42}$ & 6.6 & 41 & 378   \\
$2^{43}$ & 11.3 & 42 & 387 \\
$2^{44}$ & 17.1 & 42 & 396 \\
$2^{45}$ & 53.8 & 44 & 405 \\
$2^{46}$ & 40.2 & 45 & 414 \\
$2^{47}$ & 126  & 46 & 423 \\
$2^{48}$ & 186.5 & 47 & 432  \\
$2^{49}$ & 367.8 & 48 & 441 \\
$2^{50}$ & 887.5 & 49 & 450  \\
\hline
\end{tabular}
\end{table}
This data baffled us for a while. There are two mysteries. The first one is: the number of minors is exponential in the size of the matrix. If we assume that a zero-minor is uniformly distributed in the set of all minors, there is no way to justify the low number of tries -- both for all 2-minors and Schur complement. The other is: why do we see such a significant speed-up when we move from all 2-minors to Schur complements?

To make a fair comparison, we must admit that the Schur complement algorithm is doing more work than the all 2-minors algorithm. In particular, the Schur complement is doing multiple rounds of all 2-minors algorithm, once after each column reduction. So, let us count the number of all 2-minors algorithms done by a Schur complement algorithm. To make it precise, we count the area of the Schur complement $H^\prime$ on which the all 2-minor algorithm works. The sum of the areas is
\[(k/2)^2+(k/2+1)^2+\ldots+(k-1)^2=k(7k^2-9k+2)/24.\]
Then the ratio of the sum of the area for Schur complement and all 2-minors is 
$(7k^2-9k+2)/24k$. For $q=2^{46}$ where $k=414$, the ratio is about 120. To us this is encouraging news, it matches the experimental result. Moreover the pattern seems to be tied up with column reduction which is used in computing the Schur complement. If there were no zero-minors to be found, more work would have not produced better result.

A quick look at the above data suggests that it could be modeled by a recurrence relation that approximately doubles the previous number of average iterations at every step. This kind of recurrence can be solved and the solution is exponential in the number of steps. It then became clear that this attack cannot be extended to much higher values of $q$ and we decided to stop the experiment. 
\section{Conjecture -- initial minors}
The idea of initial minor is not new in the theory of matrices. It is
normally studied in the context of total positive matrices defined over
$\mathbb{R}$. We cannot discuss it further in this paper.  However, a
interested reader can look at Ando~\cite{ando}, Gasca and
Pen\={a}~\cite{gasca} or Pinkus~\cite{pinkus}. The reason of our interest in
that theory is that we can determine if a given matrix is total positive by
just looking at a few (quadratic in the size of the matrix) minors.

We think that something similar is true for $\mathcal{A}$ that we defined
earlier. One way to look at the issues that we raised before is that there
is a small set of minors of $\mathcal{A}$, such that, if all minors in that
set is non-zero then all minors of $\mathcal{A}$ is non-zero. In other
words, if there is a zero-minor of $\mathcal{A}$ then one of the minors in
the set of initial minors will be zero.  By doing Schur complements, we
probably found some neighbourhood of that set of initial minors. But, we did
not find the whole set. In view of our experimental evidence, we would like
to promote the \emph{conjecture of initial minors} (conjecture A), to
solve the elliptic curve discrete logarithm problem. 

\section{Implementing the Las Vegas algorithm}
We now describe a implementation of our algorithm using the Schur complement which is  available in the bitbucket repository~\cite{code}. The source code provides both a serial and a parallel implementation. Solving ECDLP over prime and binary fields is possible using our
implementations. The code is written using C++ and uses NTL and openMPI libraries. The following classes
\texttt{EC\_GF2E} and \texttt{EC\_ZZp} were implemented for elliptic curve operations, see Appendix A.
The program is developed in such a way that it reads input from a
file, which is generated by a SageMath~\cite{sage} script.
\subsection{The Las Vegas algorithm with Schur complements}
\begin{figure*}[tpb]
\begin{tikzpicture}[node distance=1.3cm]
\node (start) [startstop] {Start};
\node (pro1) [process, below of=start] {\;\; Gnerate $N_P\cdot 2k$ random numbers \;\;};
\node (pro2) [process, below of=pro1] { \;\; For $i = 0, 1,\ldots,N_p - 1$ construct  $\mathcal{M}$ and the left-kernels $\mathcal{K}$ and store \;\; };
\node (pro2b) [process, below of=pro2] { \;\; For $i = 0, 1,\ldots, N_P - 1$ pick a kernel $\mathcal{K}$ and extract $\mathcal{A}$   \;\; };
\node (pro3) [process, below of=pro2b] { \;\; For $j = 1, 2,\ldots,k$ reduce $j^\textsuperscript{th}$ column of $\mathcal{A}$, extract $H^\prime$ and distribute \;\; };

\node (pro4a) [SubProcess, below of=pro3] { \;\; $\ldots$ \;\; };
\node (pro4c) [SubProcess, below of=pro3, left of=pro4a, yshift=1.30cm, xshift=-0.5cm ] { \;\; $P_0$ \;\; };
\node (pro4d) [SubProcess, below of=pro3, right of=pro4a, yshift=1.30cm, xshift=0.85cm] { \;\; $P_{N_{P-1}}$ \;\; };
\node (pro5) [process, below of=pro4a, xshift=0cm] { \;\; If a zero 2-minor is detected by any processor STOP \;\; };

\node (stop) [startstop, below of=pro5] {Repeat};

\draw [arrow] (start) -- (pro1);
\draw [arrow] (pro1) -- (pro2);
\draw [arrow] (pro2) -- (pro2b);
\draw [arrow] (pro2b) -- (pro3);
\draw [arrow] (pro3) -- (pro4a);
\draw [arrow] (pro3) -- (pro4c);
\draw [arrow] (pro3) -- (pro4d);

\draw [arrow] (pro4a) -- (pro5);
\draw [arrow] (pro4c) -- (pro5);
\draw [arrow] (pro4d) -- (pro5);

\draw [arrow] (pro5) -- (stop);

\end{tikzpicture}
\caption{Flow chart of a parallel implementation of our algorithm.}
\end{figure*}
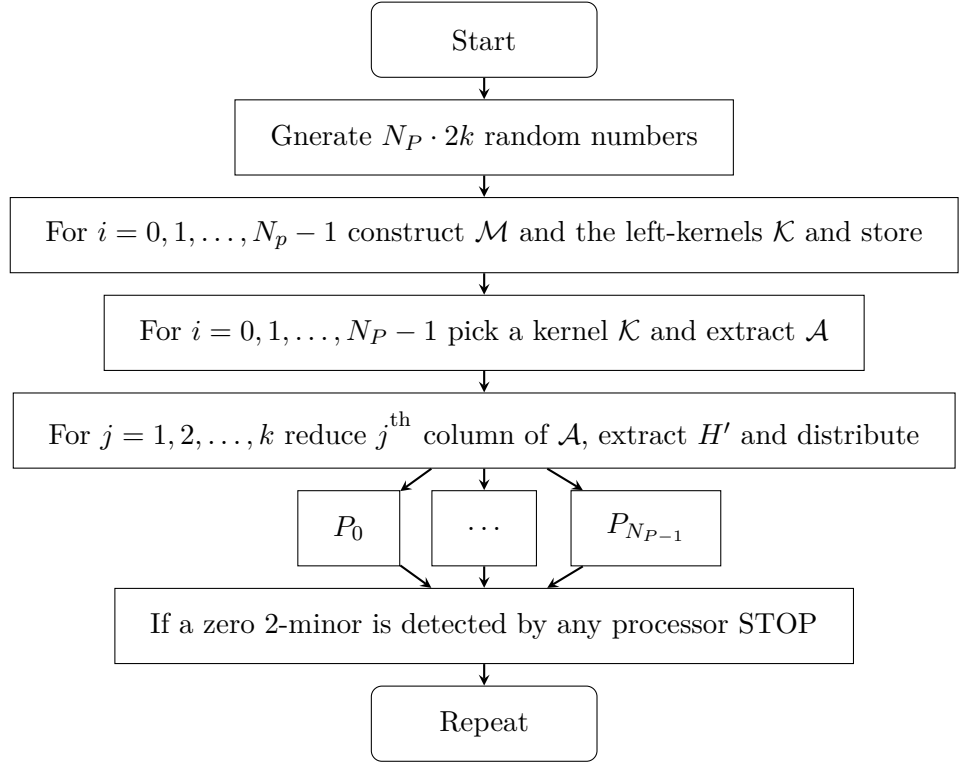	
Note that the index-origin for the list of kernels is 0, whereas the index-origin for rows and columns of a matrix is considered 1.

We assume a total of $N_P$ available processors.
We use the word processor instead of core.
The processor with \texttt{MPI\_rank} zero is assumed to be the \texttt{MASTER} processor.
All other processors are \texttt{SLAVE} processors.
All processors have access to the shared memory.

First the Las Vegas algorithm generates a set of random numbers.
These random numbers are used to construct matrices $\mathcal{M}$.
Then $\mathcal{K}$ = left-kernel($\mathcal{M}$) is computed and $\mathcal{A}$ is extracted.
A vector with $k$ zeros in $\mathcal{K}$ solves ECDLP.
\begin{description}
\item[1] In the first step the \texttt{MASTER} processor generates $2k$ random numbers for each $N_P$ processors and stores them in $N_P$ files. All processors have access to these files.
Thus there are $N_P$ files in a directory named randomNumbers.
At this point all \texttt{SLAVE} processors are idle.

\item[2] In the next step each $N_P$ processor reads their file containing random numbers and construct the matrix $\mathcal{M}$. Then each processor computes the left kernel $\mathcal{K}$ of $\mathcal{M}$.
To compute the left-kernel an inbuilt function from NTL is used.
Note that at this step $N_P$ number of kernels are generated.
Each processor stores the computed kernel in a separate files.
These are stored in a directory named kernel. Later these files are systematically processed.

\item[3] In the third step the \texttt{MASTER} processor extracts $\mathcal{A}$ from a kernel from the files. Columns of the kernel is then reduced iteratively using an inbuilt NTL function. After each
  iteration $H^\prime$ is extracted.
For the $i^\textsuperscript{th}$ iteration $H^\prime$ is a $(k-i) \times (k-i)$ matrix as $\mathcal{A}$ is a $k \times k$ matrix.
The matrix $H^\prime$ along with the combination of rows to be processed is distributed among the \texttt{SLAVE} processors.
Now each processor looks for a zero 2-minor.
\end{description}
In order to detect a 2-minor we select two rows and two columns from the given matrix $H^\prime$.
The intersection of these two rows and columns give a $2 \times 2$ matrix.
If the determinant of this $2 \times 2$ matrix is zero we have found our zero 2-minor.
To detect a zero two minor we take a different approach.
We select two rows from the given matrix.
Then the ratios of corresponding elements of these two rows are computed and stored in a vector.
If two entries in this vector are equal we have detected a two minor. Note that we are dealing with characteristic 2 in this case.

To implement this method in a parallel environment, combinations of all rows have to be processed.
We divide the total number of combinations of rows by the number of available processors.
Each processor then processes a subset of the total combinations.
If a two minor is detected we stop and row and column indices of the two minor is returned. Using Equation~\ref{eqn1} we can now say that there is a zero-minor in $\mathcal{A}$ along with the rows and columns.

An attentive reader might have noticed, in step two of our algorithm all processors are computing a kernel.
This approach is taken as computing the kernel at the \texttt{MASTER} processor leaves the 
\texttt{SLAVE} processors idle.
Also kernel computation is time consuming. These keeps the \texttt{SLAVE} processors waiting, increasing the overall execution time.
Computing kernels on all processors and saving them in files for later use reduces processor wait time.
The number of kernels generated using this method depends on the number of available processors.
\section{Conclusion}
This work presents a new way to solve the elliptic curve discrete logarithm problem. The problem is reduced to a linear algebra problem. This reduction to the linear algebra problem is fast. Then the linear algebra problem is reduced to finding a zero-minor in a non-singular square matrix over a finite field. This is the novelty of our approach. We found a way to look for a zero-minor and conjectured initial minor. If there is a small set of initial minors, polynomial in size of the group of rational points of the elliptic curve, we will have a polynomial time algorithm to solve ECDLP.
\bibliographystyle{plain}
\bibliography{paper}
\appendix
\section{ }
\begin{lstlisting}[basicstyle=\ttfamily\scriptsize]
class EC_GF2E{
public:
    ulong p;
    GF2X irrd;
    GF2E a1, a2, a3, a4, a6;
    GF2E discriminant;

    EC_GF2E(ulong);
    EC_GF2E(ulong, GF2X, GF2X);
    EC_GF2E(ulong, GF2X, GF2X, GF2X);

    void generateRandomCurve();
    void printCurve();
    EC_GF2E_Point generateRandomPoint();
    bool isPointValid(const EC_GF2E_Point &);
    GF2E getDiscriminant(const GF2E &, const GF2E &);

    void pointAddition_Doubling(const EC_GF2E_Point&, const EC_GF2E_Point&, EC_GF2E_Point &);
    EC_GF2E_Point pointDoubling(const EC_GF2E_Point &P);    
    void pointNegation(const EC_GF2E_Point&, EC_GF2E_Point&);
    void scalarMultiplicationDA(const EC_GF2E_Point&, ZZ, EC_GF2E_Point &);
    int generateMatrix(mat_GF2E &, EC_GF2E_Point, EC_GF2E_Point, ulong, ulong, ZZ *, ulong a[][3]);

    ZZ lasVegasECDLP(EC_GF2E_Point &, EC_GF2E_Point &, ZZ ordP, int);

    friend EC_GF2E_Point operator+(const EC_GF2E_Point &P, const EC_GF2E_Point & Q);
};
\end{lstlisting}

\begin{lstlisting}[basicstyle=\ttfamily\scriptsize]
class EC_ZZp{
public:

    ZZ p;
    ZZ_p a4, a6;
    ZZ_p discriminant;

    EC_ZZp(ZZ);
    EC_ZZp(ZZ, ZZ, ZZ);
    EC_ZZp(ZZ, ZZ, ZZ, ZZ);

    void generateRandomCurve();
    EC_ZZp_Point generateRandomPoint();

    void printCurve();
    bool isPointValid(const EC_ZZp_Point &P);
    void pointAddition_Doubling(const EC_ZZp_Point&, const EC_ZZp_Point&, EC_ZZp_Point &);

    void scalarMultiplication_Basic(const EC_ZZp_Point&, ZZ, EC_ZZp_Point &);
    void scalarMultiplicationDA(const EC_ZZp_Point&, ZZ, EC_ZZp_Point &);

    void pointNegation(const EC_ZZp_Point &, EC_ZZp_Point &);
    ZZ order(const EC_ZZp_Point&);

    int generateMatrix(mat_ZZ_p &, EC_ZZp_Point P, EC_ZZp_Point Q, \
        ulong k_randomNums, ulong t_randomNums, ZZ *PQ_randomNumbers, ulong weightedVector_arr[][3]);
           
    ZZ lasVegasECDLP(const EC_ZZp_Point &P, const EC_ZZp_Point &Q, ZZ);    
};
\end{lstlisting}

\end{document}